\documentclass[letterpaper,10 pt,conference]{ieeeconf}  
\IEEEoverridecommandlockouts                              %

\usepackage{cite}
\usepackage{amsmath,amssymb,amsfonts}
\usepackage{algorithm}
\usepackage{algorithmic}

\usepackage{graphicx}
\usepackage{textcomp}
\usepackage{xcolor}
\def\BibTeX{{\rm B\kern-.05em{\sc i\kern-.025em b}\kern-.08em
    T\kern-.1667em\lower.7ex\hbox{E}\kern-.125emX}}

\usepackage{comment}

\usepackage{subfigure}

\usepackage{xcolor}
\usepackage{xparse}
\usepackage{mathrsfs}
\usepackage{graphics} 
\usepackage{amsmath} 
\usepackage{amssymb}  
\usepackage{amsthm}
\usepackage{enumitem} 
\usepackage{hyperref}%
\hypersetup{colorlinks=true,  linkcolor=blue,  breaklinks=true,  urlcolor=blue,  citecolor=blue}

\usepackage{empheq} 

\usepackage{soul}
\usepackage{url}

\usepackage[prependcaption, colorinlistoftodos]{todonotes}

\newcommand\oprocendsymbol{\hbox{$\triangle$}}
\newcommand\oprocend{\relax\ifmmode\else\unskip\hfill\fi\oprocendsymbol}

\DeclareSymbolFont{bbold}{U}{bbold}{m}{n}
\DeclareSymbolFontAlphabet{\mathbbold}{bbold}
\newcommand{\vect}[1]{\mathbbold{#1}}

\newtheorem{theorem}{Theorem}
\newtheorem{remark}[theorem]{Remark}
\newtheorem{example}{Example}

\newtheorem{definition}[theorem]{Definition}
\newtheorem{proposition}[theorem]{Proposition}
\newtheorem{corollary}[theorem]{Corollary}
\newtheorem{problem}{Problem}

\newcommand {\be}{\begin{equation}}
\newcommand {\ee}{\end{equation}}

\newcommand{\x}{{\bf x}}
\newcommand{\ub}{{\bf u}}
\newcommand{\y}{{\bf y}}

\newcommand{\myalpha}{\boldsymbol{\alpha}}
\newcommand{\mybeta}{\boldsymbol{\beta}}

\title{\LARGE\bf   Reference Output Tracking in Boolean Control Networks}

\author{Giorgia Disar\`o and Maria Elena Valcher
\thanks{G. Disar\`o and  M.E. Valcher are with the Dipartimento di Ingegneria dell'Informazione,
 Universit\`a di Padova,
    via Gradenigo 6B, 35131 Padova, Italy, e-mail:  \texttt{disarogior@dei.unipd.it, meme@dei.unipd.it}}
   }
  \date{}

\begin{document}
\maketitle

\begin{abstract}
In this paper, the problem of tracking a given reference output trajectory is investigated for the class of Boolean control networks, by resorting to their algebraic representation. First, the case of a finite-length reference trajectory is addressed, and the analysis and algorithm first proposed in \cite{FT_track_BCN} are extended to be able to deal with arbitrary initial conditions and to identify all possible solutions.
The approach developed for the finite-length case is then adjusted to cope with periodic reference output trajectories.
The results of the paper are illustrated through an example.
\end{abstract}


 \section{Introduction} \label{intro}

A central and longstanding problem in control theory is the output tracking problem - that is, the design of a control law that ensures that the system output follows a prescribed reference trajectory. This challenge remains equally significant in the context of Boolean control networks (BCNs) \cite{BCNCheng}, a class of systems governed by discrete logical dynamics. In BCNs, the system states, control inputs, and outputs assume Boolean values, and the network evolves according to logical update functions rather than differential or difference equations. In this setting, output tracking consists in designing a control sequence that drives the system output to coincide with a desired Boolean reference trajectory, either instantaneously or after a finite transient period.

Various approaches have been proposed to characterize the conditions under which exact or approximate output tracking can be achieved in BCNs. In particular, the case of a constant reference output signal - often referred to as output regulation - has been extensively investigated  by resorting to the algebraic representation of a BCN \cite{Cheng2009,lin-rep-dyn-Boolean} that lays its foundations on the semi-tensor product (STP) introduced by Cheng in \cite{Cheng2009}. For example, in \cite{ACC2014_BCN} and \cite{Li2015}, the authors reformulated the output regulation problem as a set-stabilization problem, allowing the design of state feedback controllers that drive the network output to a fixed desired value.
The problem of tracking a constant reference signal was also investigated in \cite{PanEtAl2023}, where the authors proposed a framework capable of handling situations in which perfect tracking is infeasible. Specifically, the problem is formulated as an optimal tracking problem, and a suitable control scheme is designed to ensure that each state evolves toward its optimal output tracking trajectory, namely the trajectory obtained by stabilizing the BCN to the cycle that minimizes the number of output mismatches.
However, when moving to the more general and challenging scenario where the reference output trajectory is time-varying, the output tracking problem can no longer be reduced to set-stabilization, and a different analytical and algorithmic treatment is required.

The output tracking problem of finite-length trajectories in BCNs was first investigated in \cite{FT_track_BCN}. In this paper, the authors provide necessary and sufficient conditions for the trackability of a finite-length (but otherwise arbitrary) reference output trajectory from a given initial state, along with an algorithmic procedure to construct a control sequence that achieves the desired tracking objective, whenever possible. If the reference output trajectory is not trackable, the output tracking problem is formulated as an optimization problem, and an optimal control sequence that minimizes the tracking error is designed. In \cite{Liu23}, the exact output tracking results from \cite{FT_track_BCN} were extended to the case where the control input influences not only the state dynamics but also the output dynamics. 

A first extension to the case of an infinite time-varying output trajectory was presented in \cite{Li17}, where the reference signal is generated by an external Boolean network, that is an autonomous logical system. The authors derived necessary and sufficient conditions for the solvability of the problem - based on a matrix equation that is, however, not easily solvable - and proposed a control design method relying on an augmented system formulation. An alternative approach to the same problem was proposed in \cite{track_Cheng}. The approach in \cite{track_Cheng} first computes the set of global attractors and the maximal control invariant subset within a preassigned region. Based on these computations, a necessary and sufficient condition is then derived using the set controllability method applied to an auxiliary system that integrates the original BCN with the reference network. 

In this paper, we revisit the solution proposed in \cite{FT_track_BCN} for finite-length reference output trajectories with the aim of making it suitable for addressing a broader class of output tracking problems. In \cite{FT_track_BCN}, the analysis is restricted to the case where the initial condition is fixed a priori and exact tracking is required, meaning that no delay is allowed. When the reference output trajectory is trackable from the specified initial condition, a specific control sequence achieving the desired tracking objective is constructed. To this end, we propose a modified version of the algorithm in \cite{FT_track_BCN} that explicitly accounts for arbitrary initial conditions by simultaneously addressing all possible initial states and systematically identifying {\em all} control sequences that achieve exact tracking. Moreover, the proposed formulation has the advantage of being more readily extendable to scenarios in which exact tracking is not feasible but delayed tracking remains achievable, which may still be of practical relevance in many applications. Finally, the modified algorithmic framework naturally extends, with limited effort, to the more challenging and interesting problem of periodic output trajectory tracking.   

The remainder of this paper is organized as follows. In Section \ref{prel}, we give some preliminary notions about Boolean control networks. Section \ref{sec:track_mb} addresses output tracking for finite-length trajectories, whereas Section \ref{sec:periodic} is devoted to the problem of periodic output trajectory tracking. Finally, Section \ref{concl} concludes the paper. 
\medskip


{\bf Notation}.
Given  $k$ and $n$ in ${\mathbb Z},$ with $k \le n$, we denote by $[k,n]$ the set of integers $\{k, k+1, \dots, n\}$, and we use $k\bmod n$ to denote the remainder of the Euclidean division of $k$ by $n$. 
We focus on {\em Boolean vectors and matrices}, whose elements belong to the set $\mathcal B \triangleq \{0,1\}$, and adopt the standard logical operations: sum (OR)  $\vee$, product (AND) $\land$, and negation (NOT) $\neg$. The symbol $\delta_k^i$ represents the $i$-th {\em canonical vector} in $\mathbb{R}^k$, and we let $\mathcal L_k$ denote the set of all $k$-dimensional canonical vectors. Similarly, $\mathcal L_{k \times q} \subset \mathcal B^{k \times q}$ is the set of all $k \times q$ {\em logical matrices}, meaning matrices whose columns are canonical vectors of length $k$. Any matrix $L \in \mathcal L_{k \times q}$ can be written as $L = [\delta_k^{i_1} \ \delta_k^{i_2} \ \dots \ \delta_k^{i_q}]$, with  $i_1,i_2,\dots, i_q\in [1,k]$.   The $n\times n$ identity matrix is denoted by $I_n$. The vector of dimension $k$ whose entries are all ones is written $\vect{1}_k$, while the vector of zeros is $\vect{0}_k$. For a matrix $M$, its $(i,j)$-th element is $[M]_{ij}$, and for a vector $\mathbf{v}$, its $j$-th entry is $[\mathbf{v}]_j$. A sequence of vectors $\mathbf{v}(i), \mathbf{v}(i+1), \dots, \mathbf{v}(j)$, with $i,j \in \mathbb{Z}$ and $i \le j$, is denoted by $\{{\bf v}(t)\}_{t=i}^{j}$. Given two vectors ${\bf v}$ and ${\bf w}$ in $\mathcal B^n$, 
we write ${\bf v} > {\bf w}$  to indicate that each component of ${\bf v}$ is greater than the corresponding component of ${\bf w}$, i.e., $[{\bf v}]_j > [{\bf w}]_j, \forall j \in [1,n]$. 

There exists a bijective mapping between Boolean variables $X \in \mathcal B$ and  vectors $\mathbf{x} \in \mathcal L_2$, defined as $\x = [X \ \neg X]^\top$. For two matrices $A \in \mathbb{R}^{m \times n}$ and $B \in \mathbb{R}^{p \times q}$, we define their {\em (left) semi-tensor product} \cite{STP2001} as
$$
A \ltimes B \triangleq (A \otimes I_{l/n})(B\otimes I_{l/p}), \quad l \triangleq {\rm l.c.m.}\{n,p\},
$$
where $\otimes$ is the Kronecker product. 
 The semi-tensor product allows to extend  the  bijection between $\mathcal B$ and $\mathcal L_2$ to a bijection between $\mathcal B^n$ and $\mathcal L_{2^n}$. Specifically, any Boolean vector $X = [X_1, X_2, \dots, X_n]^\top \in \mathcal B^n$ is mapped to
$$\x \triangleq \begin{bmatrix} X_1 \\ \neg X_1\end{bmatrix} \ltimes \begin{bmatrix} X_2 \\ \neg X_2\end{bmatrix} \ltimes \dots \ltimes \begin{bmatrix} X_n \\ \neg X_n\end{bmatrix}.$$
A comprehensive discussion of semi-tensor product properties can be found in \cite{BCNCheng}. The {\em Hadamard (or entry-wise) product} of $A \in \mathbb{R}^{m \times n}$ and $B \in \mathbb{R}^{p \times q}$ is denoted by $A\odot B$.

\section{Boolean control networks: Preliminaries} \label{prel} 

 A {\em Boolean control network (BCN)} is a discrete-time logical system described by the following state-space model:
\begin{subequations} \label{bacon}
\begin{eqnarray}
X(t+1) &=& f(X(t), U(t)),   \label{bcn}\\ 
Y(t) &=& h(X(t)), \qquad \qquad t \in \mathbb Z_+, \label{bcno}
\end{eqnarray}
\end{subequations}
where $X(t)\in \mathcal B^n$ is the $n$-dimensional state variable, $U(t)\in \mathcal B^m$ is the $m$-dimensional input, and 
$Y(t)\in \mathcal B^p$ is the $p$-dimensional output  at time $t$. The mappings $f$ and $h$ are logical functions, with $f: \mathcal B^n \times \mathcal B^m \to \mathcal B^n$ and $h: \mathcal B^n \to \mathcal B^p$.

When no external input is applied, the system becomes autonomous, and the BCN reduces to a {\em Boolean network (BN)}. In this case, the dynamics simplifies to
\be \label{bn} 
X(t+1) = f(X(t)), \quad t \in \mathbb Z_+, 
\ee
where 
$f: \mathcal B^n \to \mathcal B^n$ remains a  logical function. 
\\
By exploiting the bijective correspondence between Boolean  and logical vectors, 
the BN \eqref{bn} 
 can be expressed in its {\em algebraic form} \cite{BCNCheng} as 
 \be
 \label{bna} 
 \x (t+1) = L \x (t), \quad t\in \mathbb Z_+,
\ee
where $\x(t) \in \mathcal L_{N}$ and $L\in {\mathcal L}_{N\times N}$,  $N \triangleq 2^n$.
Analogously, the algebraic representation of the BCN in \eqref{bacon} is given by 
 \begin{subequations}\label{BCNtot}
\begin{eqnarray} 
 \label{bcnA} 
\x (t+1) &=& L \ltimes \ub(t) \ltimes \x(t), 
 \quad t\in \mathbb Z_+,\\
 \y(t) &=& H\x(t), \label{bcnAo}
 \end{eqnarray}
 \end{subequations}
where $\x(t) \in \mathcal L_{N}$, $\ub(t) \in \mathcal L_{M}$, $\y(t) \in \mathcal L_{P}$, $L\in \mathcal L_{N\times NM}$ and $H\in {\mathcal L}_{P\times N}$, with $N \triangleq 2^n$, $M \triangleq 2^m$ and $P\triangleq 2^p$.

The logical matrix $L\in \mathcal L_{N\times NM}$ can be decomposed into $M$ square submatrices of dimension $N$ as 
\be \label{Lblock}
L = \left[\begin{array}{c|c|c|c}
L_1 & L_2 & \dots & L_M
\end{array}\right],
\ee
where  $L_i \in \mathcal L_{N\times N}, i \in [1,M]$,  is the transition matrix of the $i$-th subsystem of the BCN, i.e., the Boolean network corresponding to the constant input $\ub(t) = \delta_M^i, \forall t \in \mathbb Z_+$:
\be \label{Li}
\x (t+1) = L_i \x (t), \quad t\in \mathbb Z_+.
\ee
We associate with the BCN \eqref{BCNtot}
 the Boolean (not necessarily logical) matrix
\begin{equation} \label{ltot}
L_{\text{tot}} \triangleq L_1 \vee L_2 \vee \dots \vee L_M \in {\mathcal B}^{N \times N},
\end{equation}
which captures all possible state transitions that can occur in the network, corresponding to some input value.  Specifically, $[L_{tot}]_{\ell j} =1$ if and only if
$\delta^\ell_N$ is the {\em successor} of $\delta^j_N$ corresponding to some input value, i.e., there exists $\ub=\delta^i_M\in {\mathcal L}_M$ such that $L\ltimes \delta^i_M \ltimes \delta^j_N = \delta^\ell_N$. The successors of a state $\delta_N^j$ correspond to the indices of the nonzero entries of the vector $L_{tot} \delta_N^j \in \mathcal B^N$.    On the other hand, a state $\delta_N^\ell$ is a {\em predecessor} of a state $\delta_N^j$ if
there exists $\ub \in {\mathcal L}_M$ such that $L\ltimes \ub \ltimes \delta^\ell_N = \delta^j_N$, and this is the case if
and only if $[L_{tot}^\top \delta_N^j]_{\ell} = 1$. Consequently, the predecessors of a state $\delta_N^j$ correspond to the indices of the nonzero entries of the vector $L_{tot}^\top \delta_N^j \in \mathcal B^N$.   
\smallskip

A state $\delta_N^{j}$  of the BCN \eqref{BCNtot} is said to be {\em reachable in $k$ steps} \cite{BCNCheng,ChengContr,LASCHOV2012,EFMEV_JCD} from the state $\delta_N^{\ell}$ if there exists an input sequence $\{\ub(t)\}_{t=0}^{k-1}$  
that leads from $\x(0) = \delta_N^{\ell}$ to $\x(k) = \delta_N^{j}$, that is, $\delta_N^{j} = L \ltimes \ub(k-1) \ltimes L \ltimes \ub(k-2) \ltimes \dots \ltimes L \ltimes \ub(0) \ltimes\delta_N^{\ell}$.  The state $\delta_N^{j}$ is reachable in $k$ steps from $\delta_N^{\ell}$ if and only if (see, e.g., Section 3 in \cite{EF_MEV_BCN_Aut} and Chapter 16 in \cite{BCNCheng}) $[L_{tot}^{k}]_{j \ell} > 0 $. 
A set of states ${\mathcal X} = \{\delta_N^{j_1}, \delta_N^{j_2}, \dots, \delta_N^{j_d}\}$ 
is  {\em reachable in $k$ steps} from the state $\delta_N^\ell$ if 
at least one of the states of ${\mathcal X}$ is reachable in $k$ steps from $\delta^\ell_N$.
This is the case if and only if
$(\sum_{h=1}^d \delta^{j_h}_N)^\top  L_{tot}^k \delta^\ell_N$ is positive.

\smallskip

\section{Output Tracking of Finite-
Length Trajectories} \label{sec:track_mb}

In this section, we investigate 
the finite-horizon output tracking problem, first studied in \cite{FT_track_BCN}, by extending the analysis to arbitrary initial conditions, without the need to solve the problem separately for each initial state. To this end, we introduce a modified version of the algorithm in \cite{FT_track_BCN}, which enables us to handle arbitrary initial conditions and determine all possible solutions to the problem. Moreover, it is suitable for further extensions of the output tracking problem, as discussed later (see Remark \ref{justify}).

As a first step, we introduce the notation $\y(k,\x(0),\{\ub(t)\}_{t=0}^{k-1})$ $\triangleq$ $HL \ltimes \ub(k-1)\ltimes L \ltimes \ub(k-2) \ltimes$ $\dots$ $\ltimes L \ltimes \ub(0)\ltimes \x(0)$ to indicate the output of the BCN \eqref{BCNtot} at time $k\in \mathbb Z_+, k\ge 1,$ obtained starting from the initial condition $\x(0)$ and applying the input sequence $\{\ub(t)\}_{t=0}^{k-1}$. We give the definition of trackability of an output trajectory (see Definition 2 in \cite{FT_track_BCN}). 

\begin{definition} \label{trackability_def}
Given a BCN \eqref{BCNtot}, an initial state $\x(0)=\x_0\in {\mathcal L}_N$ and a  finite-length reference output trajectory $\{\y_r(t)\}_{t=1}^{T}$, with $\y_r(t)\in {\mathcal L}_P$,
we say that the reference output trajectory is {\em trackable} from $\x_0$ if there exists a control sequence $\{\ub(t)\}_{t=0}^{T-1}$, with $\ub(t) \in \mathcal L_M$, such that $\y(t,\x_0,\{\ub(\tau)\}_{\tau=0}^{t-1}) = \y_r(t), \forall t \in [1, T]$. 
\end{definition}

\begin{problem} \label{probl1}
Given a finite-length reference output trajectory $\{\y_r(t)\}_{t=1}^{T}$, with $\y_r(t)\in {\mathcal L}_P$, under what conditions is it trackable starting from any initial condition?
\end{problem}
\smallskip

In order to determine necessary and sufficient conditions for the solvability of Problem \ref{probl1}, we first introduce two important definitions.\\
$\bullet$ \ A state trajectory $\{\x_t\}_{t=1}^{T}$ generated by the BCN \eqref{BCNtot} is said to be {\em compatible with the reference output trajectory} $\{\y_r(t)\}_{t=1}^T$ if $H \x_t= \y_r(t)$ for every $t\in [1,T]$.\\
$\bullet$ \ We let
 $\mathcal X_1$ denote the set
of initial states $\x_1$ of all state trajectories $\{\x_t\}_{t=1}^{T}$ generated by the BCN \eqref{BCNtot} and compatible with the reference output trajectory $\{\y_r(t)\}_{t=1}^T$.\\
Condition $\mathcal X_1 \ne \emptyset$  is equivalent to saying that there exists at least one state trajectory of length $T$ that generates the output sequence $\{\y_r(t)\}_{t=1}^T$. 
However, this is not enough to ensure that the tracking problem can be solved starting from any initial state. We must also verify that $\mathcal X_1$ is reachable in one step from every state in $\mathcal L_N$, that is, that  $\forall \x_0 \in \mathcal L_N$ there exists $\ub \in \mathcal L_M$ such that $L \ltimes \ub \ltimes \x_0 \in \mathcal X_1$. This is formalized in the following theorem. 

\begin{theorem} \label{cns_exact}
Given a BCN \eqref{BCNtot} and a finite-length reference output trajectory $\{\y_r(t)\}_{t=1}^{T}$, with $\y_r(t)\in {\mathcal L}_P,$
the following facts are equivalent. 
\begin{itemize}
    \item[i)] The output trajectory $\{\y_r(t)\}_{t=1}^{T}$ is trackable from every $\x(0) = \x_0 \in \mathcal L_N$. 
    \item[ii)] The set $\mathcal X_1$ is non-empty and for every $\x(0)=\x_0 \in \mathcal L_N$ there exists $\ub \in \mathcal L_M$ such that $L \ltimes \ub \ltimes \x_0 \in \mathcal X_1$.
\end{itemize}
\end{theorem}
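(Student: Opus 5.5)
The argument is a direct unfolding of the definitions. Both implications go through the correspondence between input sequences applied from $\x_0$ and state trajectories $\{\x_t\}_{t=1}^{T}$ compatible with $\{\y_r(t)\}_{t=1}^T$. The only care needed is with indexing: the state at time $0$ is $\x_0$, while compatibility, and hence membership in $\mathcal X_1$, concerns the states at times $1,\dots,T$.

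For i) $\Rightarrow$ ii), fix an arbitrary $\x_0\in\mathcal L_N$. By Definition \ref{trackability_def} there is a control sequence $\{\ub(t)\}_{t=0}^{T-1}$ with $\y(t,\x_0,\{\ub(\tau)\}_{\tau=0}^{t-1})=\y_r(t)$ for all $t\in[1,T]$. Let $\x(t)=L\ltimes\ub(t-1)\ltimes\x(t-1)$ be the resulting states for $t\in[1,T]$. Then $\{\x(t)\}_{t=1}^T$ is a state trajectory generated by \eqref{BCNtot}, since each state is obtained from the previous one through $L$ and the inputs $\ub(1),\dots,\ub(T-2)$. It satisfies $H\x(t)=\y_r(t)$ for all $t\in[1,T]$, so it is compatible with the reference output trajectory. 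Hence its initial state $\x(1)=L\ltimes\ub(0)\ltimes\x_0$ lies in $\mathcal X_1$. This shows at once that $\mathcal X_1\neq\emptyset$ and that $\ub=\ub(0)$ is the required one-step input from $\x_0$.

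For ii) $\Rightarrow$ i), fix $\x_0$ and pick $\ub(0)$ such that $\x_1\triangleq L\ltimes\ub(0)\ltimes\x_0\in\mathcal X_1$. By definition of $\mathcal X_1$, there is a compatible state trajectory $\{\x_t\}_{t=1}^T$ starting at $\x_1$. Because it is generated by the BCN, for each $t\in[1,T-1]$ there exists $\ub(t)\in\mathcal L_M$ with $\x_{t+1}=L\ltimes\ub(t)\ltimes\x_t$. Concatenating $\ub(0)$ with these inputs gives a sequence $\{\ub(t)\}_{t=0}^{T-1}$. A straightforward induction on $t$ shows that applying it from $\x_0$ reproduces $\x_t$ at time $t$, and therefore produces output $H\x_t=\y_r(t)$ for every $t\in[1,T]$. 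This is exactly trackability from $\x_0$.

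No step is a genuine obstacle. The one point to state explicitly is the meaning of ``state trajectory generated by the BCN'': consecutive states must be linked by some input, that is $[L_{tot}]_{\x_{t+1},\x_t}=1$. With that reading, extracting the inputs $\ub(t)$ in the second implication is immediate. We also note that the non-emptiness of $\mathcal X_1$ in ii) is already implied by the reachability condition, since $\mathcal L_N\neq\emptyset$, so it carries no additional difficulty.
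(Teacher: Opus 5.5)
Your proof is correct and takes the same approach as the paper. The paper notes that trackability from every $\x_0$ requires (a) a compatible state trajectory and (b) one-step reachability of its initial states, and says the converse ``follows similar arguments''; you simply carry out both directions explicitly by extracting and concatenating inputs. There is one harmless index slip in your first implication: the states $\x(2),\dots,\x(T)$ are obtained through the inputs $\ub(1),\dots,\ub(T-1)$, not $\ub(1),\dots,\ub(T-2)$.
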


\begin{proof}
The necessity follows directly from the fact that, in order for a given reference output trajectory to be trackable from every initial state  - i.e., for condition {\em i)} to be satisfied - it is required that (a) the trajectory can actually be generated by the considered BCN \eqref{BCNtot}, meaning that there exists at least one compatible state trajectory, and (b) the set of initial states of such state trajectories can be reached in a single step from any state in the network. The combination of conditions (a) and (b) corresponds precisely to condition {\em ii)} in the theorem.\\
The reverse implication follows similar arguments.
\end{proof}

Once the previous conditions for the problem solvability have been identified, the next goal is to understand how it is possible to actually check them. 
In \cite{FT_track_BCN}, 
an algorithm was proposed to check the trackability of a finite-length reference output trajectory from a specific initial condition. When the test has a positive outcome, the algorithm additionally provides a feasible input trajectory that guarantees output tracking from that specific initial state.

We want to review Algorithm 1 in
\cite{FT_track_BCN} to be able to both test the solvability of the problem {\em for every initial state}, and record {\em all possible state/input trajectories} compatible\footnote{A state/input trajectory $\{ (\x_t,\ub_t)\}_{t=1}^T$ is compatible with the reference output sequence $\{\y_r(t)\}_{t=1}^{T}$ if, for every $t\in [1, T]$, $H\x_t = \y_r(t)$ and, for every $t\in [1, T-1]$, $\x_{t+1}= L\ltimes \ub_t\ltimes \x_t$.} with the given finite-length output sequence $\{\y_r(t)\}_{t=1}^{T}$.

To this end, we introduce, for every $i\in [1,P]$,
the set of states that generate the output value
  ${\bf y}=\delta_P^{i}$, i.e., 
  $${\mathcal C}_i \triangleq \{ \delta^j_N \in {\mathcal L}_N: H \delta^j_N =\delta^i_P\},$$
 which is called the {\em one-step indistinguishability class} of the output ${\bf y}=\delta_P^{i}$ \cite{EF_MEV_BCN_obs2012}.
  As remarked in 
 \cite{FaultDetTAC15},
 the 
set of states 
in ${\mathcal C}_i$
can be represented in vectorized form as 
\be \label{ind_class}
{\bf v}_i \triangleq H^\top \delta_P^{i},
\ee
by this meaning that ${\bf x} =\delta^j_N$ satisfies $H \delta^j_N = \delta^i_P$ if and only if $[{\bf v}_i]_j =1.$ If there exists an output value, say $\delta_P^{\bar \imath}$, that cannot be generated by the BCN  \eqref{BCNtot}, and hence the corresponding one-step indistinguishability class is empty, then the associated vector ${\bf v}_{\bar \imath}$ is the zero vector.  
\\
Now, assume that the reference output trajectory is given by $$\y_r(1) = \delta_P^{i_1},\ \y_r(2) = \delta_P^{i_2}, \ \dots, \ \y_r(T) = \delta_P^{i_T}.$$ We associate with this trajectory the  vector sequence  $\{{\bf v}(t)\}_{t=1}^T \triangleq \{{\bf v}_{i_t}\}_{t=1}^{T}$, with ${\bf v}_{i_t}$ as in \eqref{ind_class}, representing the indistinguishability classes at each time $t \in [1, T]$.
We introduce the vector sequence  $\{{\boldsymbol{\alpha}}(t)\}_{t=1}^{T}$ computed as follows \cite{FT_track_BCN}
\be \label{a_seq}
\begin{cases} 
{\boldsymbol{\alpha}}(1) = {\bf v}(1), \\
{\boldsymbol{\alpha}}(t) = {\bf v}(t) \odot (L_{tot}\ {\boldsymbol{\alpha}}(t-1)), \ t\in [2,T].  
\end{cases}
\ee
It is straightforward to observe, by extending the reasoning in \cite{FT_track_BCN}, that ${\boldsymbol{\alpha}}(1)$, which coincides with ${\bf v}(1)$, contains nonzero  entries only in positions corresponding to the states that belong to the one-step indistinguishability class ${\mathcal C}_{i_1}$ associated with $\y_r(1)$.
At the subsequent time instants $t = 2, \dots, T$, the vector ${\boldsymbol{\alpha}}(t)$ has nonzero entries in correspondence with the states $\delta_N^j$ that belong to the class 
${
\mathcal C}_{i_t}$
and 
that are successors of some state whose index is active in ${\boldsymbol{\alpha}}(t-1)$,
(which means that
$\exists i\in [1,N]$ s.t. $[L_{tot}]_{ji} \ne 0$  and   $[{\boldsymbol{\alpha}}(t-1)]_i \ne 0$).
 Therefore, if ${\boldsymbol{\alpha}}(T) \ne \vect{0}_N$, it means that there exist at least a state $\x_1$ and an input sequence $\{\ub_t\}_{t=1}^{T-1}$ such that starting from $\x(1) = \x_1$ and applying at each time step $t\in [1,  T-1]$ the   input $\ub(t)= \ub_t$, we obtain a state trajectory, say $\{\x_t\}_{t=1}^{T}$, whose associated output trajectory $\{\y_t\}_{t=1}^T$ satisfies $\y_t = \y_r(t), \forall t\in[1,T]$.  

 \begin{remark} \label{cfr_loro}
It is worth emphasizing that, although the vector sequences $\{a(t)\}_{t=0}^{T}$ in \cite{FT_track_BCN} and $\{\boldsymbol{\alpha}(t)\}_{t=1}^{T}$ in this paper are generated according to the same algorithm, they are initialized in different ways and serve two distinct purposes.  Specifically, in \cite{FT_track_BCN}, the authors address the output tracking problem for a {\em given} initial condition, say $\x_0$, and therefore initialize their vector sequence as $a(0) = \x_0$. In that context, the condition $a(T) \ne \vect{0}_N$ is both necessary and sufficient for the solvability of the output tracking problem starting from that particular initial condition $\x_0$. Instead, in the framework considered in this paper, the objective is to track the reference output trajectory starting from {\em any} initial condition. In our case, condition ${\boldsymbol{\alpha}}(T) \ne \vect{0}_N$ remains necessary, but is no longer sufficient. This is because the sequence $\{{\boldsymbol{\alpha}}(t)\}_{t=1}^{T}$ is constructed from time $t = 1$, with initialization ${\boldsymbol{\alpha}}(1) = {\bf v}(1)$. In this case, the nonzero entries of the vector ${\boldsymbol{\alpha}}(T)$ simply indicate the indices of the final states of the state trajectories compatible with the given reference output trajectory stemming from one of the states in $\mathcal X_1$. Consequently, ${\boldsymbol{\alpha}}(T)\ne \vect{0}_N$ ensures only that the set $\mathcal X_1$ of all possible initial states $\x_1$  is not the empty set, or, equivalently, that the reference output trajectory is compatible with the BCN. We need also to check that  $\mathcal X_1$
can be reached in a single step from every state in $\mathcal{L}_N$. This
  requires to preliminarily identify the set ${\mathcal X}_1$.
\end{remark}

  To determine the set $\mathcal X_1$, 
  we cannot use the vector ${\boldsymbol{\alpha}}(1)$ since, 
  in general, it is not true that for every $j\in [1,N]$ such that $[{\boldsymbol{\alpha}}(1)]_j \ne 0$ there exists an input sequence such that the state trajectory stemming from $\x(1) = \delta^j_N$ and corresponding to that input sequence guarantees  output tracking in $[1,T]$.
  So, we need
  to introduce a new vector sequence $\{{\boldsymbol{\beta}}(t)\}_{t=1}^{T}$, derived from $\{{\boldsymbol{\alpha}}(t)\}_{t=1}^{T}$ and computed backward with the goal of replacing with zeros every entry $[{\boldsymbol{\alpha}}(t)]_j$ of ${\boldsymbol{\alpha}}(t)$
that corresponds to a state $\delta^j_N$ for which no successor exists generating $\y_r(t+1)=\delta^{i_{t+1}}_P$ as output. 
This is implemented by means of Algorithm \ref{alg1}, below. 

\begin{algorithm}[h]
\caption{Solvability of Problem \ref{probl1}} \label{alg1} 
\smallskip
\textbf{Input:}  - A BCN as in \eqref{BCNtot}, i.e., a pair $(L,H)$;\\
\hspace*{2.8em} - The sequence $\{{\boldsymbol{\alpha}}(t)\}_{t=1}^{T}$. \\
\textbf{Output:} - The sequence $\{{\boldsymbol{\beta}}(t)\}_{t=1}^{T}$; \\
\hspace*{3.5em} - The set $\mathcal X_1$; \\
\hspace*{3.5em} - Is Problem \ref{probl1} solvable? {\tt 'Yes'}/{\tt 'No'}. 
\begin{enumerate} 
    \item[\bf 1.] {\em Initialization:} Set:
    \begin{itemize}
        \item $\mybeta(t) = \vect{0}_N, t \in [1,T-1]$, ${\boldsymbol{\beta}}(T) = {\boldsymbol{\alpha}}(T)$;
        \item $\mathcal X_1 = \emptyset$;
        \item $t = T-1$.
    \end{itemize} 
    \item[\bf 2.] {\tt if} $\mybeta(T) = \vect{0}_N$, {\tt then} \\
    \hspace*{1em} Return $(\{{\boldsymbol{\beta}}(t)\}_{t=1}^T, \mathcal X_1, {\tt 'No'})$; \\
    {\tt end if}
    \item[\bf 3.] {\em Iterative procedure:}
    \begin{itemize}
    \item ${\boldsymbol{\beta}}(t) = {\boldsymbol{\alpha}}(t)\odot (L_{tot}^\top {\boldsymbol{\beta}}(t+1))$;
    \item $t \leftarrow t-1$;
    \item {\tt if} $t = 0$, {\tt then} \\
    \hspace*{1em} Go to step {\bf 4}; \\
    {\tt end if}
    \end{itemize}
    \item[\bf 4.] $\mathcal I_1 = {\tt nonzero}({\boldsymbol{\beta}}(1))$; \\
    {\tt for} $j \in \mathcal I_1$, {\tt do} \\
    \hspace*{1em} $\mathcal X_1 \leftarrow \mathcal X_1 \cup \delta_N^j$; \\
    {\tt end for}
    \item[\bf 5.] {\em Conclusions:} \\ {\tt if} $L_{tot}^\top\mybeta(1) > \vect{0}_N$, {\tt then} \\
    \hspace*{1em} Return $(\{{\boldsymbol{\beta}}(t)\}_{t=1}^T, \mathcal X_1,{\tt 'Yes'})$; \\
    {\tt else} \\
    \hspace*{1em} Return $(\{{\boldsymbol{\beta}}(t)\}_{t=1}^T, \mathcal X_1,{\tt 'No'})$; \\
    {\tt end if}
\end{enumerate}
\end{algorithm}

The $j$-th entry of the vector ${\boldsymbol{\beta}}(t)$ is nonzero (i.e., $[{\boldsymbol{\beta}}(t)]_j \ne 0$) if and only if there exists a state sequence $\{\x_k\}_{k=1}^{T}$ 
compatible with the given reference output for which $\x_t = \delta_N^j$.
If we denote by 
 $\mathcal X_t, t \in[1,T]$, the set of the logical vectors $\x_t$  that represent the 
 value at time $t$  of a state trajectory $\{\x_k\}_{k=1}^T$ compatible with the reference output trajectory, i.e., 
\be \label{xt_set}
\mathcal X_t \triangleq \bigl\{ \delta_N^j : [{\boldsymbol{\beta}}(t)]_j \ne 0 \bigr\}, 
\ee
then the set $\mathcal X_1$ can be characterized as 
\be \label{x1_set}
\mathcal X_1 = \bigl\{ \delta_N^j : [{\boldsymbol{\beta}}(1)]_j \ne 0 \bigr\}. 
\ee


So far, we have established only a criterion and an algorithm to determine whether  Problem \ref{probl1} is solvable; however, Theorem \ref{cns_exact} provides no insight into how to construct the solution. To this end, we present Algorithm \ref{alg2}, derived from Algorithm \ref{alg1} (see also Algorithm 1 in \cite{FT_track_BCN}),   which is valid under the hypothesis that Problem \ref{probl1} admits a solution. This modified version accounts for the arbitrariness of the initial condition and stores all possible input values associated with state trajectories that achieve output tracking. 


The algorithm returns the sets $\mathcal T_{xu}(t), t\in [0,T-1]$, each containing all state/input pairs at time $t$  consistent with the reference output  sequence $\{\y_r(t)\}_{t=1}^{T}$, i.e., 
\!\!\begin{subequations} \label{T.set}
\begin{align} 
\!\!\!\mathcal T_{xu}(0) \triangleq & \big\{ (\delta_N^j, \delta_M^i) \in \mathcal L_N \times \mathcal L_M :  \\
&(L\ltimes \delta_M^i \ltimes \delta_N^j)\odot {\boldsymbol{\beta}}(1) \ne \vect{0}_N \big\} \nonumber \\
\!\!\!\mathcal T_{xu}(t) \triangleq & \big\{ (\delta_N^j, \delta_M^i) \in \mathcal L_N \times \mathcal L_M : [\mybeta(t)]_j \ne 0 \ {\rm and} \\
&(L\ltimes \delta_M^i \ltimes \delta_N^j)\odot {\boldsymbol{\beta}}(t+1) \ne \vect{0}_N \big\},  t \in [1, T-1]. \nonumber 
\end{align}
\end{subequations}

\begin{algorithm}[h]
\caption{State/input pairs compatible with the  finite-length  reference output trajectory $\{\y_r(t)\}_{t=1}^{T}$} \label{alg2} 
\smallskip
\textbf{Input:} - A BCN as in \eqref{BCNtot}, i.e., a pair $(L,H)$;\\
\hspace*{2.7em} - \!The sequence $\{\mybeta(t)\}_{t=1}^{T}$  generated by Algorithm \ref{alg1}. \\
\textbf{Output:} The sets $\mathcal T_{xu}(t), \forall t\in [0,T-1]$.  
\begin{enumerate}
    \item[\bf 1.] {\em Initialization:} Set $\mathcal T_{xu}(t) = \emptyset, \forall t \in [0,T-1]$, and \\ $k = 0$. 
    \item[\bf 2.]{\em Iterative procedure:}
    \begin{itemize}
        \item {\tt if} $k = 0$, {\tt then} \\
        \hspace*{1em} $\mathcal I = [1,N]$; \\
        {\tt else} \\
        \hspace*{1em} $\mathcal I = {\tt nonzero}(\mybeta(k))$; \\
        {\tt end if} 
        \item {\tt for} $j\in \mathcal I$, {\tt do} \\
        \hspace*{1em} {\tt for} $i\in[1,M]$, {\tt do} \\
        \hspace*{2.5em} {\tt if} $(L\ltimes \delta_M^i \ltimes \delta_N^j)\odot {\boldsymbol{\beta}}(k+1) \ne \vect{0}_N$, {\tt then} \\
        \hspace*{4em} $\mathcal T_{xu}(k) \leftarrow \mathcal T_{xu}(k) \cup (\delta_N^j, \delta_M^i)$; \\
        \hspace*{2.5em} {\tt end if} \\
        \hspace*{1em} {\tt end for} \\
        {\tt end for}
        \item $k \leftarrow k+1;$
        \item {\tt if} $k = T$, {\tt then} \\
        \hspace*{1em} Go to step {\bf 3}; \\ 
        {\tt else} \\
        \hspace*{1em} Repeat the {\em Iterative procedure}; \\
        {\tt end if}
        \end{itemize}
        \item[\bf 3.] {\em Conclusions:}  Return $\mathcal T_{xu}(t), \forall t \in [0,T-1]$. 
\end{enumerate}
\end{algorithm}

After applying Algorithm \ref{alg2}, to obtain an input sequence that ensures tracking of the reference output trajectory from any initial state, it is sufficient to select, at each time instant $t\in [0,T-1]$, an input value $\ub_t=\delta^i_M$ that forms  with the current state $\x_t=\delta^j_N$ a pair $(\x_t,\ub_t)=(\delta^j_N, \delta^i_M) \in \mathcal T_{xu}(t)$.  
To this end, it is convenient to introduce, for $t\in [0,T-1]$, the following family of sets:
\be
\label{u_track}
\mathcal T_u(t, \x_t) \triangleq \{\ub_t  \in \mathcal L_M : (\x_t,\ub_t) \in \mathcal T_{xu}(t)\}. 
\ee
The following corollary formally describes how the solution can be obtained.

\begin{corollary} \label{input_p1}
Given a BCN \eqref{BCNtot} and a finite-length reference output trajectory $\{\y_r(t)\}_{t=1}^{T}$, with $\y_r(t)\in {\mathcal L}_P$, 
assume that Problem \ref{probl1} is solvable. Then, for every initial state 
$\x(0) = \x_0$, an input sequence $\{\ub(t)\}_{t=0}^{T-1}$ that ensures output tracking is obtained by selecting, at each time instant $t \in [0,T-1]$, the input  $\ub(t) =\ub_t$ 
according to the following recursive procedure. For $t = 0, 1, 
    \dots, T-1$:
\begin{itemize}
    \item 
    $\mathbf{u}_t \in \mathcal{T}_u(t,\x_t)$;
     \item $t \leftarrow t+1$;
    \item $\x_{t} = L \ltimes \ub_{t-1} \ltimes \x_{t-1}$. 
\end{itemize}
\end{corollary}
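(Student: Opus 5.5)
The plan is an induction on $t$ with the invariant that, for every $t\in[1,T]$, the state $\x_t$ produced by the recursive procedure belongs to $\mathcal X_t$, i.e., $[\mybeta(t)]_j\ne 0$ where $\x_t=\delta^j_N$. For this we also need that $\mathcal T_u(t,\x_t)$ is non-empty at every step, so that the procedure never gets stuck. Once the invariant holds, tracking follows because $\mybeta(t)\le\myalpha(t)\le {\bf v}(t)=H^\top\y_r(t)$ entrywise. Indeed, $[\mybeta(t)]_j\ne0$ gives $[H^\top\y_r(t)]_j=1$, that is, $H\x_t=\y_r(t)$ for all $t\in[1,T]$, which is exactly the requirement in Definition~\ref{trackability_def}.

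\emph{Base step ($t=0$).} Since Problem~\ref{probl1} is solvable, Theorem~\ref{cns_exact} guarantees that for every $\x_0\in\mathcal L_N$ there is $\ub\in\mathcal L_M$ with $L\ltimes\ub\ltimes\x_0\in\mathcal X_1$. By \eqref{x1_set}, this means $(L\ltimes\ub\ltimes\x_0)\odot\mybeta(1)\ne\vect{0}_N$. Equivalently, this is the test $L_{tot}^\top\mybeta(1)>\vect 0_N$ in step 5 of Algorithm~\ref{alg1}. Hence $(\x_0,\ub)\in\mathcal T_{xu}(0)$ by \eqref{T.set}, so $\mathcal T_u(0,\x_0)\ne\emptyset$. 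Moreover, any choice $\ub_0\in\mathcal T_u(0,\x_0)$ gives $\x_1=L\ltimes\ub_0\ltimes\x_0$ with $\x_1\odot\mybeta(1)\ne\vect 0_N$. Since $\x_1$ is a canonical vector, this says exactly $\x_1\in\mathcal X_1$.

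\emph{Inductive step.} Assume $\x_t=\delta^j_N$ with $[\mybeta(t)]_j\ne0$ and $t\in[1,T-1]$. By the backward recursion in Algorithm~\ref{alg1}, $\mybeta(t)=\myalpha(t)\odot(L_{tot}^\top\mybeta(t+1))$. Therefore $[L_{tot}^\top\mybeta(t+1)]_j\ne0$, so some $\ell$ satisfies $[L_{tot}]_{\ell j}=1$ and $[\mybeta(t+1)]_\ell\ne0$. By the definition \eqref{ltot} of $L_{tot}$, there is $i$ with $L\ltimes\delta^i_M\ltimes\delta^j_N=\delta^\ell_N$. Thus $(\x_t,\delta^i_M)\in\mathcal T_{xu}(t)$, and $\mathcal T_u(t,\x_t)$ is non-empty. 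For any selected $\ub_t\in\mathcal T_u(t,\x_t)$, the definition \eqref{T.set} gives $\x_{t+1}\odot\mybeta(t+1)\ne\vect0_N$, i.e., $\x_{t+1}\in\mathcal X_{t+1}$. This closes the induction up to $t=T$.

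The main subtlety is the inductive non-emptiness argument: we must use the backward pruning defining $\mybeta$, not merely $\myalpha$, to ensure every surviving state has a surviving successor. The rest is bookkeeping with canonical vectors and Hadamard products. We also need the entrywise inequality $\mybeta(t)\le\myalpha(t)\le{\bf v}(t)$, which is immediate from the Hadamard products in \eqref{a_seq} and in Algorithm~\ref{alg1}, together with $\mybeta(T)=\myalpha(T)$.
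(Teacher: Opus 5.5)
Your induction is correct and is essentially the argument the paper leaves implicit, since the paper states the corollary without a formal proof: Theorem~\ref{cns_exact} gives $\mathcal T_u(0,\x_0)\neq\emptyset$, and the backward pruning $\mybeta(t)=\myalpha(t)\odot(L_{tot}^\top\mybeta(t+1))$ ensures that every state in $\mathcal X_t$ has an admissible input leading into $\mathcal X_{t+1}$. One wording fix: when ordinary products are used, the entries of $\myalpha(t)$ and $\mybeta(t)$ need not be $0/1$ (e.g., $[\myalpha(2)]_2=3$ in Example~\ref{ex}), so $\mybeta(t)\le\myalpha(t)\le{\bf v}(t)$ should be stated as an inclusion of supports, which is all your argument actually uses.
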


\begin{remark} \label{no_solv}
Clearly, if Problem \ref{probl1} is not solvable, and thus output tracking of the finite-length trajectory cannot be achieved starting from {\em every} initial state, there may still exist specific initial states for which output tracking is possible. In particular, the set of initial states, say $\mathcal X_0$, that allow output tracking corresponds precisely to those states from which $\mathcal X_1$ is reachable in one step, i.e., 
\be \label{x0_set}
\mathcal X_0 \triangleq \{\delta_N^j \in \mathcal L_N : \exists \ub \in \mathcal L_M \ {\rm s.t.} \ L\ltimes \ub \ltimes \delta_N^j \in \mathcal X_1\}. 
\ee
It is worth noticing that Algorithm \ref{alg2} can also be run in case the finite-length reference output trajectory is compatible with the BCN \eqref{BCNtot}, but ${\mathcal X}_1$ is not reachable in one step from every $\x_0\in {\mathcal L}_N$. When so, one can deduce ${\mathcal X}_0$ as the projection of ${\mathcal T}_{xu}(0)$ on the first variable, namely
$${\mathcal X}_0= \{\x_0\in {\mathcal L}_N : (\x_0, \ub_0) \in {\mathcal T}_{xu}(0), \exists \ub_0\in {\mathcal L}_M\}.$$
\end{remark}

\begin{remark} \label{justify}
It is worth noting that the output tracking problem has been addressed by decomposing it into two subproblems: (1) verifying the compatibility of the reference output trajectory with the given BCN, and (2) ensuring one-step reachability of the initial states associated with compatible state trajectories. This formulation naturally leads to an algorithmic procedure in which the vector sequences $\{\myalpha(t)\}_{t=1}^{T}$ and  $\{\mybeta(t)\}_{t=1}^{T}$ are computed forward and backward in time, respectively, starting from and up to $t = 1$. Alternatively, the same problem could be solved by computing these sequences for $t \in [0,T]$ (as in \cite{FT_track_BCN})  while initializing $\myalpha(0) = \vect{1}_N$, meaning that all possible initial states are considered simultaneously. Proceeding backward in time, under this formulation, would yield a vector $\mybeta(0)$ whose nonzero entries correspond to the initial states from which the reference output trajectory is trackable. Consequently, the problem stated in Problem~\ref{probl1} would be solvable if and only if $\mybeta(0) = \vect{1}_N$. On the other hand, the approach proposed in this paper offers two main advantages. First, by addressing the two subproblems separately, it can be more readily extended to scenarios in which exact tracking is not achievable, yet delayed tracking remains possible - meaning that the set of initial states corresponding to compatible state trajectories can be reached after more than one step (depending on the specific initial state). Second, it provides a framework that is naturally adaptable to the case of periodic reference output trajectories, addressed in Section \ref{sec:periodic}. 
\end{remark}

\section{Output Tracking of Periodic Trajectories} \label{sec:periodic}

Tracking periodic trajectories is a fundamental problem in control theory, as many practical applications require precise repetition of specific motions or signals. Therefore, in this section, we build upon the results obtained in the previous section concerning tracking of finite-length trajectories, and extend them to the more interesting case of periodic trajectory tracking. 

Consider a periodic output sequence of minimal period $T\in \mathbb Z_+, T \ge 1,$ given, for every $t\in {\mathbb Z}_+$, by 
\be \label{ref_per}
\y_r(t) = \begin{cases}
 \delta_P^{i_1}, \quad t = 1+kT, \\
\delta_P^{i_2}, \quad t = 2+kT, \\
\vdots \\
 \delta_P^{i_T}, \quad t = (k+1)T,
\end{cases}
\ee
with $k \in \mathbb Z_+$.  Similarly to Definition \ref{trackability_def}, a  periodic output trajectory \eqref{ref_per} is trackable from a given initial condition $\x(0) = \x_0 \in {\mathcal L}_N$ if there exists a control sequence $\{\ub(t)\}_{t\in \mathbb Z_+}$
such that $\y(t,\x_0,\{\ub(\tau)\}_{\tau=0}^{t-1}) = \y_r(t), \forall t \in \mathbb Z_+, t\ge 1$.
We aim to solve the following problem. 

\begin{problem} \label{probl3}
Given a periodic reference output trajectory \eqref{ref_per}, under what conditions is it trackable starting from any initial condition?
\end{problem}

As in Section \ref{sec:track_mb}, we begin by establishing  necessary and sufficient conditions for the problem solvability, and then we present an algorithm to compute the solution, if one exists.

It turns out that Problem \ref{probl3} is solvable if and only if Problem \ref{probl1} is solvable for the finite-length reference output trajectory given by $\{\y_r(t)\}_{t=1}^{T} = \{\delta_P^{i_t}\}_{t=1}^{T}$, that is, the restriction of the periodic trajectory to a single period. Note that the equivalence strongly depends on the fact that in both problems output tracking is required starting from {\em every} initial state. The proof relies on the sets ${\mathcal X}_1$ and ${\mathcal X}_T$, defined in the previous section and obtained through Algorithm \ref{alg1}.

\begin{proposition} \label{p1_p2}
Given a BCN \eqref{BCNtot} and a periodic reference output trajectory of minimal period $T$ as in \eqref{ref_per}, the following facts are equivalent. 
\begin{itemize}
\item[i)] Problem \ref{probl3} is solvable, i.e., the periodic  output trajectory in \eqref{ref_per} is trackable   from every $\x(0) = \x_0 \in \mathcal L_N$. 
\item[ii)]  Problem \ref{probl1} is solvable for the reference output trajectory $\{\y_r(t)\}_{t=1}^{T}$, i.e., the finite-length trajectory $\{\y_r(t)\}_{t=1}^{T}$ is trackable   from every  $\x(0)=\x_0 \in \mathcal L_N$.
\end{itemize}
\end{proposition}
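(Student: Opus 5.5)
The implication i) $\Rightarrow$ ii) is immediate. If an input sequence $\{\ub(t)\}_{t\in\mathbb Z_+}$ makes the output coincide with \eqref{ref_per} for all $t\ge 1$, then its truncation $\{\ub(t)\}_{t=0}^{T-1}$ makes the output coincide with $\y_r(t)=\delta_P^{i_t}$ for $t\in[1,T]$. This is because, by causality, the output at time $t$ depends only on $\x_0$ and $\ub(0),\dots,\ub(t-1)$. Since this holds for every $\x_0\in\mathcal L_N$, Problem \ref{probl1} is solvable for the restricted trajectory.

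For ii) $\Rightarrow$ i), I would build the infinite input sequence by concatenating finite-horizon solutions, period after period. The key observation is that the restriction of \eqref{ref_per} to the window $[kT+1,(k+1)T]$ is exactly $\{\delta_P^{i_t}\}_{t=1}^{T}$ shifted by $kT$. The BCN \eqref{BCNtot} is time-invariant, so tracking this window from the state $\x(kT)$ is the same problem as tracking $\{\y_r(t)\}_{t=1}^T$ from the initial condition $\x(kT)$.

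The steps, in order, are as follows. First, start from $\x(0)=\x_0$. By ii), via Theorem \ref{cns_exact} or Corollary \ref{input_p1}, there is a sequence $\{\ub(t)\}_{t=0}^{T-1}$ achieving $\y(t)=\y_r(t)$ for $t\in[1,T]$. Let $\x(T)$ be the resulting state, which is some element of $\mathcal L_N$; more precisely, it lies in $\mathcal X_T$. Second, since ii) guarantees trackability from \emph{every} state, it holds in particular from $\x(T)$. So there is a sequence $\{\ub'(\tau)\}_{\tau=0}^{T-1}$ that tracks $\{\y_r(\tau)\}_{\tau=1}^T$ from $\x(T)$. Set $\ub(T+\tau)=\ub'(\tau)$. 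Time invariance then gives $\y(T+\tau)=\y_r(\tau)=\y_r(T+\tau)$ by periodicity. Third, proceed by induction on $k$: assuming inputs on $[0,kT-1]$ achieve tracking on $[1,kT]$, apply ii) at the state $\x(kT)$ to extend to $[0,(k+1)T-1]$. This defines $\{\ub(t)\}_{t\in\mathbb Z_+}$ by recursion, with one choice made per period; no axiom-of-choice subtlety arises because one can fix, for each of the finitely many states, a single solution sequence.

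In terms of the sets computed by Algorithm \ref{alg1}, the point is that $\mathcal X_T$ need not be contained in $\mathcal X_1$. However, condition ii) of Theorem \ref{cns_exact} says $\mathcal X_1$ is reachable in one step from every state, in particular from every state of $\mathcal X_T$. That is exactly the link needed to pass from the end of one period to the start of the next.

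The main obstacle is the second step: verifying that the state reached at the end of a period is a legitimate starting point for the next period. Here the hypothesis that tracking is possible from \emph{every} initial state is essential; with a fixed initial state the argument would fail. I would make this step rigorous by writing the output at time $kT+\tau$ explicitly through the semi-tensor product expression $\y(kT+\tau,\x_0,\cdot)=\y(\tau,\x(kT),\{\ub(kT+s)\}_{s=0}^{\tau-1})$. This identity follows from associativity of $\ltimes$ in the definition of $\y(k,\x(0),\cdot)$. Together with $\y_r(kT+\tau)=\y_r(\tau)$ for $\tau\in[1,T]$, it completes the induction.
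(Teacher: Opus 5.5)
Your proposal is correct and follows essentially the same route as the paper. The implication i) $\Rightarrow$ ii) is obtained by truncation. For ii) $\Rightarrow$ i), you concatenate tracking segments period by period: the end-of-period state (which lies in $\mathcal X_T$) is a valid starting point for the next period, because tracking, equivalently one-step reachability of $\mathcal X_1$, holds from every state. Your use of time invariance and of the identity $\y(kT+\tau,\x_0,\cdot)=\y(\tau,\x(kT),\cdot)$ just makes rigorous the recursive step that the paper states informally.
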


\begin{proof}
The implication {\em i) $\Rightarrow$ ii)} is trivial. 

Conversely, the implication {\em ii) $\Rightarrow$ i)} is a direct consequence of Theorem \ref{cns_exact}. Indeed, according to Theorem \ref{cns_exact}, if the finite-length trajectory $\{\y_r(t)\}_{t=1}^{T}$ is trackable starting from every $\x(0) = \x_0 \in \mathcal L_N$, then the following conditions hold: 
\begin{enumerate}
\item[1)] The set $\mathcal X_1$ is non-empty, and hence the finite output sequence $\{\y_r(t)\}_{t=1}^{T}$ can be generated by at least one state trajectory, say $\{\x_t\}_{t=1}^T$. 
\item[2)] The set $\mathcal X_1$ is reachable in one step from every state in $\mathcal L_N$, and in particular, from all states in $\mathcal X_T$. 
\end{enumerate}
Together, these two conditions ensure the following: first, for every $\x(0) = \x_0\in {\mathcal L}_N$ there exists an input $\ub(0)=\ub_0 \in {\mathcal L}_M$ such that $\x_1 = L\ltimes \ub_0\ltimes \x_0 \in {\mathcal X}_1$; next, there exists a state trajectory from  $\x_1$ to some $\x_T\in {\mathcal X}_T$ that generates the desired reference output over the interval $[1,T]$; finally, condition 2) guarantees that there exists a transition from   $\x_T$ back to one of the initial states in ${\mathcal X}_1$, thereby enabling the recursive generation of the same reference output in each subsequent interval of length $T$. 
\end{proof}

 
We now focus on the situation in which the given periodic output trajectory is {\em not} trackable starting from {\em every} initial state. 
By Proposition \ref{p1_p2}, this directly implies that also the finite-length trajectory obtained by restricting the periodic trajectory to a single period can not be tracked from every initial state. 
However, we assume that there exists at least one state sequence $\{\x_t\}_{t=1}^T$ such that $H\x_t = \y_r(t), \forall t \in [1,T]$, and therefore the finite-length output trajectory $\{\y_r(t)\}_{t=1}^T$ can be generated by the BCN \eqref{BCNtot}.

The existence of a state trajectory producing the desired output on a single period does not ensure that the finite-length output sequence can be repeated indefinitely.
For this to happen, 
it is necessary and sufficient that there exists a family of (finite-length) state trajectories of the BCN, say 
\be \label{fam}
{\mathcal F}_T \triangleq \{ \{\x_t^{(\ell)}\}_{t=1}^T, \ell\in [1, \bar \ell]\},
\ee
satisfying the following two conditions:
\begin{itemize}
\item[(C1)] for each $\ell\in [1,\bar \ell]$ and each $t\in [1,T]$, $H\x_t^{(\ell)} = \y_r(t)$, namely each  state trajectory in ${\mathcal F}_T$ is compatible with  the finite-length  output trajectory $\{\y_r(t)\}_{t=1}^T$;
\item[(C2)] for each $\ell\in [1,\bar \ell]$ the state $\x_T^{(\ell)}$ has a successor in the set $\{\x_1^{(\ell)}, \ell\in [1,\bar \ell]\}.$
\end{itemize}
Using the analysis carried out in the previous section and   definition \eqref{xt_set} of the sets ${\mathcal X}_t, t\in [1,T],$ conditions (C1) and (C2) are verified if not only ${\mathcal X}_T$ is non-empty, but also 
all the successors of the states in ${\mathcal X}_T$ belong to ${\mathcal X}_1.$
It is worth noticing, however, that the fact that this does not happen after running Algorithm \ref{alg1} does not mean that
the aforementioned family of state trajectories
${\mathcal F}_T$  (satisfying (C1) and (C2)) does not exist. Indeed, the algorithm needs to be modified  in such a way that, as far as at least a subset of the states in ${\mathcal X}_T$ has a successor in ${\mathcal X}_1$, one can remove
all the state trajectories  
that end in the states of ${\mathcal X}_T$ with no successor in ${\mathcal X}_1$ and perform the same check again.\footnote{It is important to underline that removing a state sequence does not necessarily mean removing all the states that are part of such trajectory. In fact, if such states are part of other state trajectories compatible with $\{\y_r(t)\}_{t=1}^T$ and ending in a state of ${\mathcal X}_T$ with a successor in ${\mathcal X}_1$ they will be preserved.} This may require several rounds, but since at every round the number of considered final states decreases, the procedure necessarily stops either because condition (C2) holds or because the set of state trajectories has become empty.

Algorithm \ref{alg3}, below, receives as input the sequence $\{ \myalpha(t)\}_{t=1}^{T+1}$, generated according to  \eqref{a_seq} but extended till time $T+1$ rather than $T$. Then, based on $\{ \myalpha(t)\}_{t=1}^{T+1}$, it generates the vector sequence $\{\mybeta_1(t)\}_{t = 1}^{T+1}$  and  the associated set $\mathcal X_1^{(1)}$,  following the same procedure as in Algorithm \ref{alg1}, as well as the set ${\mathcal X}^{(1)}_{T+1}.$ Three cases may arise:\\
$\bullet$  If ${\mathcal X}^{(1)}_{1} = \emptyset$ (equivalently, $\mybeta_1(t) = \vect{0}_N, \forall t \in [1,T+1]$) then the algorithm stops since the  periodic trajectory cannot be generated by the BCN. \\
$\bullet$  If\footnote{Note that if ${\mathcal X}^{(1)}_{T+1}=\emptyset$, then necessarily ${\mathcal X}^{(1)}_{1} = \emptyset$.} ${\mathcal X}^{(1)}_{1} \ne \emptyset$ and 
${\mathcal X}^{(1)}_{T+1} \subseteq {\mathcal X}^{(1)}_{1}$,  the periodic output trajectory is compatible with the BCN and the algorithm stops. \\
$\bullet$ Finally, if ${\mathcal X}^{(1)}_{1} \ne \emptyset$ but it does not include
${\mathcal X}^{(1)}_{T+1}$, then 
the states of ${\mathcal X}^{(1)}_{T+1}$ that do not belong to ${\mathcal X}^{(1)}_{1}$ are removed and a new iteration starts. Specifically, a new vector $\mybeta_2(T+1)$ is generated by zeroing in $\mybeta_1(T+1)$ the entries that are zero in $\mybeta_1(1).$
At this stage, the vector sequence $\{\mybeta_2(t)\}_{t=1}^{T+1}$ is generated backward from $\mybeta_2(T+1)$: for every $t = T, T-1, \dots, 1$ the vector $\mybeta_2(t)$ is obtained from $\mybeta_1(t)$ by replacing with  zeros the entries associated with those states that have no longer 
 successors corresponding to the nonzero entries of $\mybeta_2(t+1)$.
Again, the sets ${\mathcal X}^{(2)}_{T+1}$ and ${\mathcal X}^{(2)}_{1}$ are compared and  three cases - as before - can arise.  If ${\mathcal X}^{(2)}_{T+1} \ne \emptyset$ but it is not included in 
${\mathcal X}^{(2)}_{1}$, then the algorithm starts a new round, generating the vectors $\{\mybeta_3(t)\}_{t=1}^{T+1}$ and the sets ${\mathcal X}^{(3)}_{1}$ and ${\mathcal X}^{(3)}_{T+1}$.
This procedure will come to an end, after a finite number of steps  $k^*$, with either ${\mathcal X}^{(k^*)}_{1} = \emptyset$ or 
${\mathcal X}^{(k^*)}_{T+1} \subseteq {\mathcal X}^{(k^*)}_{1}$, thus 
establishing whether 
the periodic output trajectory is compatible with the BCN.



\begin{algorithm}[h]
\caption{Compatibility of the periodic reference output trajectory in \eqref{ref_per}} \label{alg3} 
\smallskip
\textbf{Input:}  - A BCN as in \eqref{BCNtot}, i.e., a pair $(L,H)$;\\
\hspace*{2.8em} - The sequence $\{\myalpha(t)\}_{t=1}^{T+1}$. \\
\textbf{Output:} - The sequence $\{\mybeta_{k^*}(t)\}_{t=1}^{T+1}$; \\
\hspace*{3.5em} - The set $\mathcal X_1^{(k^*)}$; \\
\hspace*{3.5em} - Is the periodic reference output trajectory \eqref{ref_per} \\ 
\hspace*{4.2em} compatible  with the BCN \eqref{BCNtot}? {\tt 'Yes'}/{\tt 'No'}.
\begin{enumerate}
    \item[\bf 1.] {\em Initialization:} 
    \begin{itemize}
        \item Set $k = 1$.
        \item Apply Algorithm \ref{alg1} to the sequence $\{\myalpha(t)\}_{t=1}^{T+1}$ to obtain the sequence $\{\mybeta_1(t)\}_{t=1}^{T+1}$. 
        \item Determine the sets  $\mathcal X_1^{(1)}$ and $\mathcal X_{T+1}^{(1)}$ as in \eqref{xt_set} using the sequence $\{\mybeta_1(t)\}_{t=1}^{T+1}$.
    \end{itemize} 
    \item[\bf 2.] {\em Iterative procedure:}
    \begin{itemize}
        \item {\tt if} $\mathcal X_{T+1}^{(k)} \subseteq \mathcal X_1^{(k)}$ {\tt or} $\mathcal X_1^{(k)} = \emptyset$, {\tt then} \\ 
        \hspace*{1em} Set $k^* = k$ and go to step {\bf 3}; \\
        {\tt end if}
        \item $k \leftarrow k+1$;
        \item $\mybeta_k(T+1) = \mybeta_{k-1}(T+1) \odot \mybeta_{k-1}(1)$;
        \item $t = T$;
        \item {\tt while} $t > 0$, {\tt do} \\\hspace*{1em}$\mybeta_k(t) = \mybeta_{k-1}(t)\odot (L_{tot}^\top \mybeta_k(t+1))$; \\
        \hspace*{1em} $t \leftarrow t-1$;
        \item Determine the sets  $\mathcal X_1^{(k)}$ and $\mathcal X_{T+1}^{(k)}$ as in \eqref{xt_set} using the sequence $\{\mybeta_k(t)\}_{t=1}^{T+1}$;
        \item Repeat the {\em Iterative procedure}.
    \end{itemize}
    \item[\bf 3.] {\em Conclusions:} \\
    {\tt if} $\mathcal X_1^{(k)} = \emptyset$, {\tt then} \\
    \hspace*{2em} Return $(\mathcal X_1^{(k^*)}, \{\mybeta_{k^*}(t)\}_{t=1}^{T+1}, {\tt 'No'})$; \\
    {\tt else} \\
    \hspace*{2em} Return $(\mathcal X_1^{(k^*)}, \{\mybeta_{k^*}(t)\}_{t=1}^{T+1}, {\tt 'Yes'})$; \\
    {\tt end if}
\end{enumerate}
\end{algorithm}

\begin{remark} \label{k_star}
As already mentioned, Algorithm \ref{alg3} terminates in a finite number of steps $k^*$. 
The value of $k^*$ is upper bounded by the quantity $|{\mathcal C}_{i_1}| + 1$. Indeed, the total number of iterations depends on the number of nonzero entries in $\mybeta_1(T+1)$. Since the algorithm is initialized with $\mybeta_1(T+1) = \myalpha(T+1)$, the number of nonzero entries in $\mybeta_1(T+1)$ cannot exceed the number of those in ${\bf v}(T+1) = {\bf v}(1)$, which is equal to the cardinality of $\mathcal C_{i_1}$. Consequently, the maximum number of iterations occurs when $\mybeta_1(T+1)$ contains $|{\mathcal C}_{i_1}|$ nonzero entries, at each iteration $k$ of the algorithm
a single entry of $\mybeta_{k-1}(T+1)$ is replaced with zero in $\mybeta_k(T+1)$, and the procedure terminates with $\mybeta_{k^*}(1) = \vect{0}_N$, yielding $k^* = |{\mathcal C}_{i_1}| + 1$. It is important to observe, however, that when Algorithm \ref{alg3} terminates with an affirmative answer, the total number of iterations cannot exceed $|{\mathcal C}_{i_1}|$. 
\end{remark}

One of the outcomes of Algorithm \ref{alg3} is the set $\mathcal X_1^{(k^*)}$, which contains the initial states $\x_1$ of the state trajectories $\{\x_t\}_{t = 1}^{+\infty}$ compatible with the given periodic reference output trajectory \eqref{ref_per}. To solve the periodic output tracking problem for a {\em  specific} initial condition $\x(0) = \x_0$, we also need to ensure that such set is reachable in one step from $\x_0$. We denote by $\mathcal X_0^p$ the set of initial states starting from which the given periodic output trajectory is trackable, i.e., 
\be \label{xop}
\mathcal X_0^p \triangleq \{\delta_N^j \in \mathcal L_N : \exists \ub \in \mathcal L_M \ {\rm s.t.} \ L \ltimes \ub \ltimes \delta_N^j \in \mathcal X_1^{(k^*)}\}.
\ee
Clearly, the set $\mathcal X_0^p$ in \eqref{xop} is a subset of the set $\mathcal X_0$, defined in \eqref{x0_set} and computed for the finite-length trajectory $\{\y_r(t)\}_{t=1}^{T}$.  
This trivially follows from the fact that if the periodic reference output trajectory can be tracked from some initial condition $\x_0$, then the finite-length trajectory corresponding to a single period of that periodic trajectory can also be tracked starting from $\x_0$.

In the 
following theorem, we provide necessary and sufficient conditions for the solvability of the output tracking problem of a periodic trajectory starting from a specific initial condition $\x(0) = \x_0$.  The proof follows immediately from the previous analysis and hence is omitted.

\begin{theorem} \label{cns_periodic}
Given a BCN \eqref{BCNtot}, a periodic reference output trajectory of minimal period $T$ as in \eqref{ref_per}, and an initial condition $\x(0) = \x_0$, the periodic reference output trajectory is trackable starting from $\x_0$  if and only if the following conditions hold:
\begin{itemize} 
\item[i)] The set $\mathcal X_1^{(k^*)}$ is non-empty and $\mathcal X_{T+1}^{(k^*)} \subseteq \mathcal X_1^{(k^*)}$.  
\item[ii)] There exists $\ub \in \mathcal L_M$ such that $L \ltimes \ub \ltimes \x_0 \in \mathcal X_1^{(k^*)}$. 
\end{itemize}
\end{theorem}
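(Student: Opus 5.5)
The proof splits into sufficiency and necessity. Both directions rest on an invariant of Algorithm \ref{alg3}, which I would prove first.

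\emph{Invariant.} For every $k$ and every $t\in[1,T+1]$, $\mybeta_k(t)\le\mybeta_1(t)\le\myalpha(t)$ entrywise. Moreover, every $\delta_N^j$ with $[\mybeta_k(t)]_j\ne 0$ satisfies $H\delta_N^j=\y_r(t)$, where $\y_r(T+1)=\y_r(1)$, and, for $t\le T$, it has a successor $\delta_N^\ell$ with $[\mybeta_k(t+1)]_\ell\ne 0$. Both facts follow directly from the recursion $\mybeta_k(t)=\mybeta_{k-1}(t)\odot(L_{tot}^\top\mybeta_k(t+1))$ and from $\myalpha(t)\le {\bf v}(t)$.

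\emph{Sufficiency.} Assume i) and ii), and choose $\ub_0$ so that $\x_1=L\ltimes\ub_0\ltimes\x_0\in\mathcal X_1^{(k^*)}$. By the invariant there is a chain $\x_1\to\x_2\to\dots\to\x_{T+1}$ with each $\x_t\in\mathcal X_t^{(k^*)}$, each step realized by some input, and $H\x_t=\y_r(t)$. By i), $\x_{T+1}\in\mathcal X_1^{(k^*)}$, so the argument can be restarted from $\x_{T+1}$ in the role of $\x_1$. Periodicity of $\y_r$ and induction on the number of periods then give an infinite input sequence that tracks the reference for all $t\ge 1$.

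\emph{Necessity.} Suppose a trackable input sequence exists from $\x_0$, and let $\{\x_t\}_{t\ge1}$ be the resulting state trajectory. Define $S\triangleq\{\x_{1+mT}: m\in\mathbb Z_+\}$. Every element of $S$ is the start of a compatible trajectory of length $T+1$, namely $\x_{1+mT},\dots,\x_{1+(m+1)T}$, whose final state again lies in $S$. I would prove by induction on $k$ the following claim: for every $m$ and every $s\in[1,T+1]$, the entry of $\mybeta_k(s)$ corresponding to $\x_{s+mT}$ is nonzero.

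For $k=1$, the forward recursion for $\myalpha$ shows that states on a compatible trajectory starting in ${\bf v}(1)$ are active in $\myalpha$. The backward recursion of Algorithm \ref{alg1} then keeps them active, because each one has a successor on the same trajectory that is active.

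For the inductive step, $\mybeta_k(T+1)=\mybeta_{k-1}(T+1)\odot\mybeta_{k-1}(1)$. Since $\x_{1+(m+1)T}$ is also the time-$1$ state of period $m+1$, it is active in both factors, so it remains active in $\mybeta_k(T+1)$. The backward sweep then preserves activity along each period as before.

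Consequently $\mathcal X_1^{(k)}\ne\emptyset$ for all $k$, and in particular $\mathcal X_1^{(k^*)}\ne\emptyset$. The algorithm terminates only when $\mathcal X_1^{(k^*)}=\emptyset$ or $\mathcal X_{T+1}^{(k^*)}\subseteq\mathcal X_1^{(k^*)}$; since the first case is excluded, the second holds, which gives i). Finally, $\x_1=L\ltimes\ub(0)\ltimes\x_0$ belongs to $S\subseteq\mathcal X_1^{(k^*)}$, which gives ii).

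The main obstacle is making the necessity induction rigorous. The states of $S$ must survive every pruning round, which works only because the time-$(T+1)$ endpoint of each period coincides with the time-$1$ point of the next. The induction hypothesis therefore has to be stated simultaneously for all periods $m$, and one must verify that the masking step $\mybeta_{k-1}(T+1)\odot\mybeta_{k-1}(1)$ never deletes such an endpoint.
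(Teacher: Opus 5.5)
Your proof is correct and follows the reasoning the paper leaves implicit: the paper omits the proof and appeals to the (C1)--(C2) discussion and the pruning in Algorithm \ref{alg3}. You prove sufficiency by chaining period segments as in the proof of Proposition \ref{p1_p2}, and necessity by showing that the states of an actual tracking trajectory survive every pruning round, which is exactly the property of the pruning that the paper asserts without proof. One small point: the entries of $\myalpha(t)$ and $\mybeta_k(t)$ need not be $0/1$ under ordinary arithmetic, so read your invariant $\mybeta_k(t)\le\mybeta_1(t)\le\myalpha(t)$ as inclusion of supports, which is all you actually use.
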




The next step is to identify all state/input trajectories that achieve output tracking of the given periodic trajectory for the admissible initial conditions, that is, for the states in $\mathcal X_0^p$. 
To this end, the algorithmic procedure described in Section \ref{sec:track_mb} can be suitably adapted to the case of periodic trajectory tracking. Specifically, we can apply Algorithm \ref{alg2} (see Remark \ref{no_solv}) to the sequence $\{\mybeta_{k^*}(t)\}_{t=1}^{T+1}$ returned by Algorithm \ref{alg3}.
 The algorithm returns the sets $\mathcal T_{xu}(t), t \in [0,T],$ and we can define the sets  ${\mathcal T}_u(t, \x_t),  t \in [0,T],$ according to \eqref{u_track}. \\
 It is worth underlying that
 $\mathcal T_{xu}(T) \subseteq \mathcal T_{xu}(0)$. Indeed, 
 for every $\x \in {\mathcal X}_T^{(k^*)}$, ${\mathcal T}_u(T,\x) = {\mathcal T}_u(0,\x).$  However, for the states $\x\in {\mathcal X}_0^p \setminus {\mathcal X}_T^{(k^*)}$ the set
 ${\mathcal T}_u(T,\x)$  is empty, while ${\mathcal T}_u(0,\x)$ is not. So, in the sequel we can drop ${\mathcal T}_{xu}(T)$ and ${\mathcal T}_u(T, \x)$ and only use the analogous sets for $t=0$.\\
 We are now in a position to state the following corollary whose proof easily follows  from the previous discussion. 

\begin{corollary} \label{u_per}
Given a BCN \eqref{BCNtot}, a periodic reference output trajectory of minimal period period $T$ as in \eqref{ref_per}, and an initial condition $\x(0) =\x_0\in {\mathcal L}_N$, assume that the periodic reference output trajectory \eqref{ref_per} is trackable from $\x_0$.  
Let $\mathcal T_{xu}(t), t \in [0,T-1]$, be the sets of state/input values generated by Algorithm \ref{alg2}, and 
compute from them the sets $\mathcal T_u(\cdot, \cdot)$ using \eqref{u_track}.  
\\
Then, an input sequence $\{\ub(t)\}_{t\in \mathbb Z_+}$ that ensures  tracking of the periodic trajectory \eqref{ref_per} is obtained  by selecting, at each time instant $t \in \mathbb Z_+$, the input  $\ub(t) =\ub_t$ according to the following recursive procedure. For $t = 0, 1, 2, 
    \dots$:
\begin{itemize}
    \item 
    $\mathbf{u}_t \in \mathcal{T}_u(t\bmod T,\x_t)$;
     \item $t \leftarrow t+1$;
    \item $\x_{t} = L \ltimes \ub_{t-1} \ltimes \x_{t-1}$. 
\end{itemize}
\end{corollary}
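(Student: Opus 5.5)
The plan is to prove by induction on $t$ that the recursive procedure is always well defined, i.e., $\mathcal T_u(t\bmod T,\x_t)\neq\emptyset$ for every $t\in\mathbb Z_+$, and that the resulting trajectory satisfies $H\x_t=\y_r(t)$ for all $t\ge 1$. Throughout, write $\mybeta\equiv\mybeta_{k^*}$ and $\mathcal X_t\equiv\mathcal X_t^{(k^*)}$. First I would record two structural facts about the output of Algorithm \ref{alg3}.

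\emph{Fact (a).} For $t\in[1,T]$, every $\delta_N^j$ with $[\mybeta(t)]_j\neq 0$ has some input $\delta_M^i$ with $(L\ltimes\delta_M^i\ltimes\delta_N^j)\odot\mybeta(t+1)\neq\vect{0}_N$. This holds by the backward recursion $\mybeta_{k}(t)=\mybeta_{k-1}(t)\odot(L_{tot}^\top\mybeta_k(t+1))$, together with the interpretation of $L_{tot}^\top$ as the predecessor map.

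\emph{Fact (b).} $\mybeta(t)\le\myalpha(t)\le{\bf v}(t)$ entrywise, so $[\mybeta(t)]_j\neq0$ implies $H\delta_N^j=\y_r(t)$.

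In addition, the termination condition $\mathcal X_{T+1}\subseteq\mathcal X_1$ from condition i) of Theorem \ref{cns_periodic} gives $\mybeta(T+1)\le\mybeta(1)$ entrywise. Hence, by Fact (a) at $t=T$, any state in $\mathcal X_T$ admits an input leading into $\mathcal X_{T+1}\subseteq\mathcal X_1$.

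The induction hypothesis is: for $t\ge1$, $\x_t\in\mathcal X_{((t-1)\bmod T)+1}$.

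\emph{Base case.} Since $\x_0$ is trackable, condition ii) of Theorem \ref{cns_periodic} gives $\mathcal T_u(0,\x_0)\neq\emptyset$, which is exactly the definition of $\mathcal T_{xu}(0)$ in \eqref{T.set}. The chosen $\ub_0$ therefore yields $\x_1$ with $[\mybeta(1)]$ active, so $\x_1\in\mathcal X_1$.

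\emph{Inductive step.} Take $t$ with $1\le t\bmod T\le T-1$. Then $\x_t\in\mathcal X_{t\bmod T}$ and Fact (a) give $\mathcal T_u(t\bmod T,\x_t)\neq\emptyset$, and any selected input places $\x_{t+1}$ in $\mathcal X_{(t\bmod T)+1}$.

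\emph{Wrap-around step.} This is the delicate case: $t\bmod T=0$ with $t\ge T$. Here $\x_t\in\mathcal X_T$, but the procedure consults $\mathcal T_u(0,\x_t)$ rather than $\mathcal T_u(T,\x_t)$. I would argue as follows. By Fact (a) at time $T$ and $\mybeta(T+1)\le\mybeta(1)$, there exists $\ub$ with $L\ltimes\ub\ltimes\x_t$ active in $\mybeta(T+1)$, hence in $\mybeta(1)$. So $\mathcal T_u(0,\x_t)\supseteq\mathcal T_u(T,\x_t)\neq\emptyset$. Conversely, any $\ub_t\in\mathcal T_u(0,\x_t)$ sends $\x_t$ into $\mathcal X_1$ by the definition of $\mathcal T_{xu}(0)$. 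This is precisely the observation made before the corollary, namely that $\mathcal T_{xu}(0)$ may replace $\mathcal T_{xu}(T)$. In every case, Fact (b) then gives $H\x_{t+1}=\y_r(t+1)$, using the $T$-periodicity ${\bf v}(t)={\bf v}_{i_{((t-1)\bmod T)+1}}$.

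The main obstacle I expect is exactly this wrap-around step. The point is that inputs in $\mathcal T_u(0,\cdot)$ only guarantee landing in $\mathcal X_1=\mathcal X_1^{(k^*)}$, and the argument must use the \emph{final} iterate $\mybeta_{k^*}$, so that the states reached are genuinely continuable for all subsequent periods. It is also worth checking that the corollary's use of $\mathcal T_{xu}(t)$, $t\in[1,T-1]$, built from $\mybeta_{k^*}$ on $[1,T]$, involves only times where Fact (a) applies. That holds because the recursion in Algorithm \ref{alg3} is run for every $t\in[1,T]$.
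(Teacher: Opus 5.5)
Your argument is correct and is essentially the paper's. The paper's proof is the discussion before the corollary: Algorithm~\ref{alg2} is run on $\{\mybeta_{k^*}(t)\}_{t=1}^{T+1}$, and for $\x\in\mathcal X_T^{(k^*)}$ one has $\mathcal T_u(T,\x)=\mathcal T_u(0,\x)$, so your induction on $\x_t\in\mathcal X_{((t-1)\bmod T)+1}$ simply makes that argument explicit (you rightly need only the inclusion $\mathcal T_u(T,\x)\subseteq\mathcal T_u(0,\x)$ and the fact that every input in $\mathcal T_u(0,\x)$ leads into $\mathcal X_1^{(k^*)}$). One minor precision: the entries of $\myalpha(t)$ and $\mybeta(t)$ can exceed $1$ because they are built with ordinary matrix products, so your entrywise inequalities should be read as inclusions of supports, which is all you actually use.
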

\smallskip 

We now demonstrate the validity of the proposed algorithmic procedure through a small numerical example, in which all the steps required to compute the solution can be explicitly written. 

\begin{example} \label{ex}
Consider a BCN \eqref{BCNtot} with $N = 6$ , $M = 2$ and $P = 2$, described by the following matrices: 
\begin{eqnarray*}
L &=& \left[\begin{array}{cccccc|cccccc}
\delta_6^2 \!&\! \delta_6^2 \!&\! \delta_6^4 \!&\! \delta_6^5 \!&\! \delta_6^5 \!&\! \delta_6^5 & \delta_6^4 \!&\! \delta_6^3 \!&\!
\delta_6^1 \!&\! \delta_6^2 \!&\! 
\delta_6^6 \!&\! \delta_6^5
\end{array}\right], \\
H &=& \left[\begin{array}{cccccc}
\delta_2^1 & \delta_2^1 & \delta_2^2 & \delta_2^1 & \delta_2^2 & \delta_2^1
\end{array}\right]. 
\end{eqnarray*}
Assume that the periodic reference output trajectory to be tracked is given by 
$$
\y_r(t) = \begin{cases}
 \delta_2^{1}, \quad t = 1+3k, \\
 \delta_2^{1}, \quad t = 2+3k, \\
\delta_2^{2}, \quad t = 3(k+1), \\
\end{cases} \quad k \in \mathbb Z_+
$$
whose minimal period is $T = 3$.

Before applying the algorithmic procedure to determine whether the given periodic trajectory is trackable - and, if so, to compute the corresponding solution - we first identify the indistinguishability classes of the periodic sequence together with their vectorized representations $\{{\bf v}(t)\}_{t=1}^3$, obtaining: ${\bf v}(1) = {\bf v}(2) = \delta_6^1+\delta_6^2+\delta_6^4+\delta_6^6$, ${\bf v}(3) = \delta_6^3+\delta_6^5$. 

It is not difficult to verify that the conditions in Theorem \ref{cns_exact} are not satisfied.
According to Proposition \ref{p1_p2}, this implies that the periodic reference output trajectory cannot be tracked from every initial condition. Therefore, after computing the sequence $\{\myalpha(t)\}_{t=1}^4$, 
Algorithm \ref{alg3} must be executed to determine the set of initial states (if any) from which the given periodic output trajectory is trackable. The initialization of the algorithm (i.e., $k = 1$) returns the following sequence: 
\begin{eqnarray*}
\mybeta_1(1) \!\!\!\!&=&\!\!\!\! \begin{bmatrix}
    * \!&\! * \!&\! 0 \!&\! * \!&\! 0 \!&\! 0
    \end{bmatrix}^\top, \ \mybeta_1(2) =  \begin{bmatrix}
    0 \!&\! * \!&\! 0 \!&\! * \!&\! 0 \!&\!  0
    \end{bmatrix}^\top, \\ \mybeta_1(3) \!\!\!\!&=&\!\!\!\!  \begin{bmatrix}
    0 \!&\! 0 \!&\! * \!&\! 0 \!&\! * \!&\! 0
    \end{bmatrix}^\top, \ \mybeta_1(4) =  \begin{bmatrix}
    * \!&\! 0 \!&\! 0 \!&\! * \!&\! 0 \!&\! *
    \end{bmatrix}^\top,
\end{eqnarray*}
where $*$ denotes a nonzero entry, as the precise value is not relevant. Since $\mathcal X_1^{(1)} \ne \emptyset$, but $\mathcal X_4^{(1)} \not\subseteq \mathcal X_1^{(1)}$, we 
repeat the iterative procedure 
(i.e., $k = 2$), obtaining:
\begin{eqnarray*}
\mybeta_2(1) \!\!\!\!&=&\!\!\!\! \begin{bmatrix}
    * \!&\! * \!&\! 0 \!&\! * \!&\! 0 \!&\! 0
    \end{bmatrix}^\top, \ \mybeta_2(2) =  \begin{bmatrix}
    0 \!&\! * \!&\! 0 \!&\! 0 \!&\! 0 \!&\!  0
    \end{bmatrix}^\top, \\ \mybeta_2(3) \!\!\!\!&=&\!\!\!\!  \begin{bmatrix}
    0 \!&\! 0 \!&\! * \!&\! 0 \!&\! 0 \!&\! 0
    \end{bmatrix}^\top, \ \mybeta_2(4) =  \begin{bmatrix}
    * \!&\! 0 \!&\! 0 \!&\! * \!&\! 0 \!&\! 0
    \end{bmatrix}^\top.
\end{eqnarray*}
Now, $\mathcal X_4^{(2)} \subseteq \mathcal X_1^{(2)}$, and hence we set $k^* = 2$. \\
The set of initial states for which $\mathcal X_1^{(2)} = \{\delta_6^1,\delta_6^2,\delta_6^4\}$ is reachable in one step (and hence the periodic trajectory is trackable) is given by 
$
\mathcal X_0^p = \{\delta_6^1, \delta_6^2, \delta_6^3, \delta_6^4\}. 
$
By applying Algorithm \ref{alg2} to the sequence $\{\mybeta_{2}(t)\}_{t=1}^{T+1}$ returned by Algorithm \ref{alg3}, we get 
\begin{eqnarray*}
\mathcal T_{xu}(0) \!\!\!&=&\!\!\! \{(\delta_6^1,\delta_2^1); (\delta_6^1,\delta_2^2); (\delta_6^2,\delta_2^1);  (\delta_6^3,\delta_2^1); \\
\!\!\!&&\!\!\!(\delta_6^3,\delta_2^2);(\delta_6^4,\delta_2^2)\}, \\
\mathcal T_{xu}(1) \!\!\!&=&\!\!\! \{(\delta_6^1,\delta_2^1); (\delta_6^2,\delta_2^1);
(\delta_6^4,\delta_2^2)\}, \\
\mathcal T_{xu}(2) \!\!\!&=&\!\!\! \{ (\delta_6^2,\delta_2^2)\}. 
\end{eqnarray*}
Therefore, if for every $\x_0 \in \mathcal X_0^p$, the input sequence $\{\ub(t)\}_{t\in \mathbb Z_+}$ is selected according to Corollary \ref{u_per}, the tracking objective is achieved.  

\end{example}

\section{Conclusions} \label{concl}

In this paper, we addressed the output tracking problem for BCNs, starting with the case of finite-length trajectories and then extending the analysis to the more challenging and practically relevant problem of periodic trajectory tracking. Building upon the results in \cite{FT_track_BCN}, we revisited and generalized the algorithmic procedure to handle a wider range of tracking scenarios.
We provided necessary and sufficient conditions for the solvability of both problems, requiring trackability from any initial state, and developed algorithms for computing all corresponding control sequences. Furthermore, we investigated the case, where the tracking solution does not exist for all initial conditions, which led to a modified algorithmic approach specifically tailored for the periodic case. Finally, a small numerical example was presented to demonstrate the effectiveness of the proposed procedure.


\begin{thebibliography}{100}

\bibitem{STP2001}
D.~Cheng.
\newblock Semi-tensor product of matrices and its some applications to
  {M}orgen's problem.
\newblock {\em Science in China Series: Information Sciences}, 44(3):195--212,
  2001.

\bibitem{Cheng2009}
D.~Cheng.
\newblock Input-state approach to {B}oolean {N}etworks.
\newblock {\em IEEE Trans. Neural Networks}, 20, (3):512 -- 521, 2009.

\bibitem{ChengContr}
D.~Cheng and H.~Qi.
\newblock Controllability and observability of {B}oolean control networks.
\newblock {\em Automatica}, 45(7):1659--1667, 2009.

\bibitem{lin-rep-dyn-Boolean}
D.~Cheng and H.~Qi.
\newblock A linear representation of dynamics of {B}oolean networks.
\newblock {\em IEEE Trans. Automatic Control}, 55:2251--2258, 2010.

\bibitem{BCNCheng}
D.~Cheng, H.~Qi, and Z.~Li.
\newblock {\em Analysis and control of {B}oolean networks}.
\newblock Springer-Verlag, London, 2011.

\bibitem{EF_MEV_BCN_obs2012}
E.~Fornasini and M.~E. Valcher.
\newblock Observability, reconstructibility and state observers of {B}oolean
  control networks.
\newblock {\em IEEE Tran. Aut. Contr.}, 58 (6):1390 -- 1401, 2013.

\bibitem{EF_MEV_BCN_Aut}
E.~Fornasini and M.~E. Valcher.
\newblock On the periodic trajectories of {B}oolean {C}ontrol {N}etworks.
\newblock {\em Automatica}, 49:1506--1509, 2013.

\bibitem{ACC2014_BCN}
E.~Fornasini and M.E. Valcher.
\newblock Feedback stabilization, regulation and optimal control of {B}oolean
  control networks.
\newblock In {\em Proc. of the 2014 American Control Conference}, pages
  1993--1998, Portland, OR, 2014.

\bibitem{FaultDetTAC15}
E.~Fornasini and M.E. Valcher.
\newblock Fault detection analysis of {B}oolean control networks.
\newblock {\em IEEE Trans. Automatic Control}, 60 (10):2734--2739, 2015.

\bibitem{EFMEV_JCD}
E.~Fornasini and M.E. Valcher.
\newblock Recent developments in {B}oolean networks control.
\newblock {\em Journal of Control and Decision}, 3:1--18, 2016.

\bibitem{LASCHOV2012}
D.~Laschov and M.~Margaliot.
\newblock Controllability of {B}oolean control networks via the
  {P}erron–{F}robenius theory.
\newblock {\em Automatica}, 48(6):1218--1223, 2012.

\bibitem{Li2015}
H.~Li, Y.~Wang, and L.~Xie.
\newblock Output tracking control of {B}oolean control networks via state
  feedback: Constant reference signal case.
\newblock {\em Automatica}, 59:54--59, 2015.

\bibitem{Li17}
H.~Li, L.~Xie, and Y.~Wang.
\newblock Output regulation of {B}oolean control networks.
\newblock {\em IEEE Transactions on Automatic Control}, 62(6):2993--2998, 2017.

\bibitem{Liu23}
Z.~Liu, Y.~Wang, Y.~Liu, and Q.~Ruan.
\newblock Reference trajectory output tracking for {B}oolean control networks
  with controls in output.
\newblock {\em Mathematical Modelling and Control}, 3(3):256--266, 2023.

\bibitem{PanEtAl2023}
Y.~Pan, S.~Fu, J.~Wang, and W.~Zhang.
\newblock Optimal output tracking of {B}oolean control networks.
\newblock {\em Information Sciences}, 626:524--536, 2023.

\bibitem{track_Cheng}
X.~Zhang, Y.~Wang, and D.~Cheng.
\newblock Output tracking of {B}oolean control networks.
\newblock {\em IEEE Transactions on Automatic Control}, 65(6):2730--2735, 2020.

\bibitem{FT_track_BCN}
Z.~Zhang, T.~Leifeld, and P.~Zhang.
\newblock Finite horizon tracking control of {B}oolean control networks.
\newblock {\em IEEE Transactions on Automatic Control}, 63(6):1798--1805, 2018.

\end{thebibliography}

\end{document}